\begin{document}
\title{An almost-solvable model of complex network dynamics}

\author{Qi Guo$^1$, Artur Sowa$^{1}$}
\affiliation{$^{1}$Department of Mathematics and Statistics, University of Saskatchewan}

\newtheorem{definition}{Definition}[section]
\newtheorem{theorem}{Theorem}[section]
\newtheorem{lemma}{Lemma}[section]
\newtheorem{corollary}{Corollary}[section]
\newtheorem{remark}{Remark}
\newtheorem{conclusion}{Conclusion}
\newtheorem{example}{Example}
\newtheorem{proposition}{Proposition}[section]

\begin{abstract}

We discuss a specific model, which we refer to as RandLOE, of a large multi-agent network whose dynamic is prescribed via a combination of deterministic local laws and random exogenous factors. The RandLOE approach lies outside the framework of Stochastic Differential Equations, but lends itself to analytic examination as well as to stable simulation even for relatively large networks.

RandLOE is based on the logistic operator equation (LOE), which is a multidimensional dynamical system extending the classical logistic equation via an operator-algebraic interaction term.   The network is defined by interpreting the LOE variable as an adjacency matrix of a complete graph. Depending on the choice of parameters, it can display a number of essentially distinct dynamical characteristics: e.g. cycles of expansion and contraction.



\vspace{.5cm}

\noindent
\emph{Keywords:} hierarchical complex networks, operator equations, multi-agent interactions, resolvents of random matrices
\vspace{.5cm}

\noindent
\emph{PACS numbers:} 89.65.Gh, 89.75.Hc, 89.75.Fb
\vspace{.5cm}

\noindent
\emph{AMS classification:} 60K35, 60H25, 60H35, 47B80, 47H40
\end{abstract}
\date{\today}
\maketitle

\section{Introduction}

Complex Network Theory (CNT)
aspires to furnish a unifying quantitative foundation for a number of scientific disciplines, including  Economics, Computer Science, Biology, and Social Sciences, (see e.g. \cite{Souma}, \cite{Albert}, \cite{Thiery}, \cite{Girvan}, or a review in \cite{Newman}).
While there is no generally accepted standard definition of CNT, the following characterisation seems to capture much of the essence: ``[CNT is a study of] networks whose structure is irregular, complex and dynamically evolving in time, with the main focus moving from the analysis of small networks to that of systems with thousands or millions of nodes, and with a renewed attention to the properties of networks of dynamical units", \cite{Boccaletti}. The emancipation of CNT from its graph-theoretic roots is most manifest in the character of new results, such as the discovery of a power-law governing random, large networks, such as the www, \cite{Faloutsos}.
Another distinctive feature of CNT is its emphasis on statistical properties as well as on various questions of control, stability, robustness, \cite{Albert},  \cite{Tannenbaum}, sensitivity of dynamics to network structure, \cite{Nishikawa}, and on questions of epidemiological character, \cite{Pastor}. New types of structures have been found to occur in real networks, such as the small-world structure, \cite{Watts}, \cite{Amaral}. These discoveries shed some light on the function of most complex systems, such as the human brain, \cite{Liu}, \cite{Wu}.

Our focus in this work is on quantitative analysis of network dynamics. There is a body of preexisting work in this direction, e.g. results  on small-world networks of oscillators, \cite{Hong}, or scale-free random networks, \cite{Barabasi}.  An alternative approach is via operator equations, particularly the logistic operator equation (LOE), \cite{Sowa2015interacting}. While inspired by considerations from Quantum Statistical Mechanics, the LOE, and the related stochastic process, may be interpreted as prescribing complex network dynamics. Indeed, the LOE captures complex interactions in a hierarchical structure that is either pulled toward equilibrium or forced into a cyclical shift between expansion and contraction, while in both cases being perturbed by random exogenous forces. While, in principle, similar models could be formulated within the frame of Stochastic Differential Equations, computational simulation of such structures is prohibitively slow and inaccurate for a large number of variables. In contrast, the LOE is amenable to accurate simulation even for a relatively large size. That is due to the fact that the solutions are in large part determined by explicit, perfectly accurate, recursive schemas.

Our focus in this work is on new types of solutions of the LOE and the stochastic processes that they induce.  One of the main observations is that a special (unimodal, i.e. non-hierarchical) case of such a process is, in fact, an It\^{o} process, see Section \ref{Section_Motivation}. In Section \ref{Section_LOE_deterministic} we study a much more complex multimodal LOE, and provide an algorithmic description of the deterministic solutions. While in the seminal work, \cite{Sowa2015interacting}, the multimodal hierarchy of modes (eigenvalues) scales as $1/\log p$ where $p$ are consecutive primes, in the present work it scales as $1/m$, where $m$ are consecutive integers. In Section \ref{Section_RandLOE} we outline some properties of the resulting stochastic process. We summarize in Section \ref{Section_Summary} by highlighting the network interpretation. The results presented here are mostly theoretical. However, many insights, not to mention the showcased graphs, grew out of numerical experiments, involving custom codes in MATLAB, also using a third-party package, \cite{NCSOStools}, Python, \cite{Python}, and  Cytoscape, \cite{Cytoscape}. Although this report is focused on theoretical aspects, we envision a number of real-life applications. In particular, this framework may be deployed toward valuation of futures associated with commercial networks that display significant interactivity and hierarchization. It may also be used toward analyses of the stability of such networks, opening toward applications in macroeconomics. The importance of complex networks approach to stability analysis has been highlighted in \cite{Battiston}.

\section{The resolvent and the unimodal RandLOE}\label{Section_Motivation}

We fix a Hilbert space $\mathbb{H}$ (generally complex, but alternatively real when so indicated)  and an a basis $\ket{e_n}$ ($n= 1,2\ldots N$).  Let $X: \mathbb{H}\rightarrow \mathbb{H}$ be an operator.
Recall, the resolvent of $X$ is defined\footnote{Note that another variant of the definition of the resolvent, namely $R(z)=(zI-X)^{-1}$, is also encountered in literature. } as
\begin{equation}\label{Def_resolvent}
  R = R_X(z) = (I-zX)^{-1}, \quad z\in \mathbb{C}
\end{equation}
Let $\lambda_n$ ($n=1,\ldots N$) be the eigenvalues of $X$. It is well known, see \cite{Kato}, that $z\mapsto R(z)$ is a meromorphic matrix-valued function. Its only poles occur at points $z= 1/\lambda_n$. Locally, say, around $z=0$, the resolvent may be represented via the power series
$
  R = \sum_{m=0}^{\infty} X^m z^m.
$
It seems less commonly known that the resolvent satisfies the following differential equation:
\begin{equation}\label{Oper_Eq_resolvent}
-z \frac{d}{dz} R = R - R^2.
\end{equation}
Indeed, applying the product rule to $\frac{d}{dz} [RR^{-1}] = 0,$ one readily finds $\frac{d}{dz} R = RXR.$ On the other hand, $R-R^2 = R(R^{-1} - I)R = -zRX R$, hence (\ref{Oper_Eq_resolvent}). Equation (\ref{Oper_Eq_resolvent}) and its generalizations are the main focus of this work. We observe that for a smooth curve  $t\mapsto z(t)$, where $t$ is a real parameter interpreted as time, (\ref{Oper_Eq_resolvent}) yields
\begin{equation}\label{path_LOE_specia}
\frac{d}{dt} R = -\frac{z'}{z}(R - R^2).
\end{equation}
This equation furnishes a model for deterministic dynamic. It also induces a stochastic dynamic as follows:
Assume that $\mathbb{H} = \mathbb{R}^N$ is a real Hilbert space, and let
\begin{equation}\label{X_matrix}
 X(t)= \sum_{i,j=1}^{N}W_{ij}(t) \ket{e_i} \bra{e_j}
\end{equation}
be a random matrix whose entries are mutually independent standard Wiener processes $W_{ij}(t)$. We define a special (unimodal) case of the RandLOE as
\begin{equation}\label{Def_randloe_special}
  t \mapsto R(t)= R_{X(t)}(z(t)) = [\,I - z(t) X(t)\,]^{-1}.
\end{equation}
Next, we demonstrate that this is an It\^{o} process. The proof requires the following observation:
\begin{lemma}\label{special_case_sde_lemma} 	
	For an arbitrary $N$-by-$N$ matrix $Y$ and $X(t)$ as in (\ref{X_matrix}), we have
	\begin{equation}\label{lemma_special case}
	dX \, Y \, dX = Y^{T} dt,
	\end{equation}
	where $Y^T$ denotes the transpose of $Y$.
\end{lemma}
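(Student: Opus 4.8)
The plan is to verify the identity by direct computation in the fixed basis $\ket{e_n}$, relying on the It\^{o} multiplication table for the entries. The decisive input is that mutual independence of the $W_{ij}(t)$ gives $dW_{ij}\,dW_{kl} = \delta_{ik}\,\delta_{jl}\,dt$, together with the standard rules $dW_{ij}\,dt = 0$ and $(dt)^2 = 0$. Accordingly, (\ref{lemma_special case}) is to be read as an identity of second-order stochastic differentials (quadratic covariations), and the whole argument reduces to tracking index contractions.

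First I would expand $dX = \sum_{i,j} dW_{ij}\,\ket{e_i}\bra{e_j}$ and write $Y = \sum_{k,l} Y_{kl}\,\ket{e_k}\bra{e_l}$ in the same basis. Forming $dX\,Y$ collapses one pair of indices through $\braket{e_j|e_k} = \delta_{jk}$, so $dX\,Y = \sum_{i,j,l} dW_{ij}\,Y_{jl}\,\ket{e_i}\bra{e_l}$. Multiplying on the right by $dX = \sum_{p,q} dW_{pq}\,\ket{e_p}\bra{e_q}$ and contracting again via $\braket{e_l|e_p} = \delta_{lp}$ produces $dX\,Y\,dX = \sum_{i,j,l,q} dW_{ij}\,Y_{jl}\,dW_{lq}\,\ket{e_i}\bra{e_q}$, where the two surviving Wiener differentials now share the index $l$.

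Next I would apply the It\^{o} rule $dW_{ij}\,dW_{lq} = \delta_{il}\,\delta_{jq}\,dt$ to that remaining product. This forces $l=i$ and $q=j$, so the quadruple sum collapses to $\sum_{i,j} Y_{ji}\,\ket{e_i}\bra{e_j}\,dt$. Since $Y_{ji} = (Y^T)_{ij}$, the right-hand side is exactly $Y^T\,dt$, which is the claim. As a guard against sign-of-the-problem errors, I would first run the special case $Y = \ket{e_a}\bra{e_b}$, for which the same manipulation gives $\ket{e_b}\bra{e_a}\,dt = Y^T\,dt$, confirming that a genuine transpose (rather than $Y$ itself, or a trace term $\mathrm{tr}(Y)\,I$) appears.

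There is no analytic obstacle here; the entire content is combinatorial bookkeeping, and the only point demanding care is keeping the two kinds of contraction cleanly separated: those coming from the operator products $\braket{e_j|e_k}$ and $\braket{e_l|e_p}$, versus the one coming from the quadratic covariation $dW_{ij}\,dW_{lq}$. It is precisely their interplay that transposes $Y$ --- the covariation identifies the outer index of the left $dX$ with the inner index of $Y$, and the inner index of the right $dX$ with the outer index of $Y$ --- so that the surviving matrix carries entries $Y_{ji}$ rather than $Y_{ij}$.
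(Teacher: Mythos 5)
Your proof is correct and follows essentially the same route as the paper's: expand $dX$ and $Y$ in the basis $\ket{e_i}\bra{e_j}$, contract the operator products, and apply the It\^{o} rule $dW_{ij}\,dW_{lq}=\delta_{il}\delta_{jq}\,dt$ to collapse the remaining sum to $\sum_{i,j}Y_{ji}\,\ket{e_i}\bra{e_j}\,dt=Y^{T}\,dt$. The sanity check on $Y=\ket{e_a}\bra{e_b}$ is a nice addition but not needed.
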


\begin{proof}
	 Let $Y= \sum_{i,j=1}^{N} Y_{ij} \ket{e_i} \bra{e_j}$, where $Y_{ij}$ are coefficients of $Y$. We have
	
	\begin{equation}\label{lemma_special case_proof}
	\begin{split}
	dX\, Y\, dX &=  (\sum_{i,j=1}^{N} dW_{ij} \ket{e_i} \bra{e_j}) (\sum_{i,j=1}^{N} Y_{ij} \ket{e_i} \bra{e_j}) (\sum_{i,j=1}^{N} dW_{ij} \ket{e_i} \bra{e_j}) \\& = \sum_{i,j=1}^{N} \sum_{k,l=1}^{N} dW_{ik}dW_{lj}Y_{kl}  \ket{e_i} \bra{e_j} = \sum_{i,j=1}^{N} Y_{ji} dt \ket{e_i} \bra{e_j}  = Y^{T} dt,
	\end{split}
	\end{equation}
	where we have used identity $dW_{ik}dW_{lj}=\delta_{il}\delta_{kj}dt$.	
\end{proof}

The following is the main result of this section:

\begin{theorem}\label{resolvent_sde_theorem}
	$R(t)$ defined in (\ref{Def_randloe_special}) is an It\^{o} process, and satisfies
	\begin{equation}\label{resolvent_SDE}
	dR=\left[-\frac{z'}{z}( R - R^2)+ z^2\,RR^{T}R\right]\,dt+z\,R \, dX \, R.
	\end{equation}
\end{theorem}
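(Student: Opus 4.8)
The plan is to obtain \eqref{resolvent_SDE} by applying the It\^{o} formula to the matrix-valued map $A \mapsto A^{-1}$ along the process $A(t) = I - z(t)X(t)$, for which $R(t) = A(t)^{-1}$. At the rigorous level one may view $R$ as a function of the deterministic time $t$ (entering only through $z$) and of the $N^2$ scalar It\^{o} processes $W_{ij}(t)$, so that the ordinary multidimensional It\^{o} formula applies entrywise wherever $I - zX$ is invertible; it is cleaner, however, to package the entire computation at the matrix level and only appeal to the entrywise version as justification.

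First I would record the differential of $A$. Since $z$ is a smooth deterministic function of finite variation, $A$ is affine in the driving processes and carries no It\^{o} correction of its own:
\begin{equation}
dA = -d(zX) = -z' X\,dt - z\,dX.
\end{equation}
Next I would invoke the second-order expansion of the inverse,
\begin{equation}
d(A^{-1}) = -A^{-1}(dA)A^{-1} + A^{-1}(dA)A^{-1}(dA)A^{-1},
\end{equation}
in which the first term is the first-order contribution and the second encodes the It\^{o} correction through the quadratic covariation of the entries of $A$. Because matrix multiplication is noncommutative, the factors must be kept in this precise order throughout.

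Substituting $A^{-1} = R$, the first-order term yields $z' RXR\,dt + z\,R\,dX\,R$. Here I would use the same algebraic resolvent identity already exploited in deriving \eqref{Oper_Eq_resolvent}, namely $R - R^2 = R(R^{-1}-I)R = -z\,RXR$, hence $RXR = -\tfrac{1}{z}(R - R^2)$; this converts the deterministic piece into $-\tfrac{z'}{z}(R-R^2)\,dt$ and leaves the martingale term $z\,R\,dX\,R$ exactly as stated. In the second-order term $R(dA)R(dA)R$, every occurrence of the $dt$-part of $dA$ produces a contribution of order higher than $dt$ and is discarded, so only $z^2\,R\,(dX\,R\,dX)\,R$ survives.

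The final step is to evaluate this surviving correction using Lemma~\ref{special_case_sde_lemma}: taking $Y = R$ there gives $dX\,R\,dX = R^{T}\,dt$, whence the correction equals $z^2\,R R^{T} R\,dt$. Collecting the drift and martingale contributions then reproduces \eqref{resolvent_SDE}. I expect the principal difficulty to be careful bookkeeping rather than anything conceptual: one must justify the matrix second-order It\^{o} expansion (equivalently, verify the entrywise formula and that $z$ contributes no quadratic variation), and, most importantly, preserve the noncommutative ordering of the matrix factors so that the middle slot of the quadratic term holds a single copy of $R$ — precisely the configuration to which Lemma~\ref{special_case_sde_lemma} applies.
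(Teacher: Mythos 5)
Your proposal is correct and follows essentially the same route as the paper: both compute $dR$ from the relation $R=(I-zX)^{-1}$, use the resolvent identity $R-R^2=-zRXR$ to produce the drift term, and apply Lemma~\ref{special_case_sde_lemma} with $Y=R$ to evaluate the quadratic correction $z^2 R\,dX\,R\,dX\,R = z^2 RR^TR\,dt$. The only difference is presentational — the paper derives the second-order inverse expansion in-line from $0=d(RR^{-1})$ plus a bootstrapping substitution, whereas you quote that expansion as a known formula.
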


\begin{proof} 
 Note that the It\^{o} lemma indicates the general fact that $R$ satisfies a stochastic differential equation; we only need to identify its form.  To this end, we engage infinitesimal stochastic calculus. First, observe
\[
dR^{-1} = d[I - z X] = - z' X dt - z dX,
\]
and
\[
0 = d(RR^{-1}) = dR\, R^{-1} + R\, dR^{-1} +  dR\, dR^{-1}.
\]
Combining these identities we obtain
\[
dRR^{-1} = z'R\,Xdt + zR\,dX + dR\, (z' X dt + z dX) = z'R\,Xdt + z\,R\,dX + z\, dR\, dX,
\]
where we have cancelled the term $z' dR\,Xdt$ which has order $O(t^{3/2})$.
Next, observing
$
R - R^2 = R(R^{-1}-I)R = -z RXR,
$
we obtain
 \[
dR = -\frac{z'}{z}(R-R^2)\,dt + z\,R\,dX\,R + z\, dR\, dX\,R.
\]
At this stage we use a bootstrapping argument. Namely, the term $z\, dR\, dX\,R$ has only the $dt$ component, and that can only depend  on the $dX$ component of $dR$. Since the latter is $z\,R\,dX\,R$, we obtain
 \[
dR = -\frac{z'}{z}(R-R^2)\,dt + z\,R\,dX\,R + z^2 R\,dX\,R\, dX\,R.
\]
In light of (\ref{lemma_special case}) this is equivalent to (\ref{resolvent_SDE}).
\end{proof}
Stochastic process (\ref{Def_randloe_special}) is well defined as long as $1/z(t)$ does not coincide with any of the eigenvalues of $X(t)$. This may be ensured for a considerable time interval $t\in [0,T]$ by choosing the initial conditions in which $1/z(0)$ is separated from all the eigenvalues of $X(0)$ (which are all zero for the standard Wiener process) by considerable distance. However, other considerations may also affect the choice of curve $z(t)$. In particular, it is especially interesting to examine the paths $z=\exp (it)$ and $z=\exp (-t)$. In the first case the resolvent series may be interpreted as a Fourier series, and in the second as a generalized Dirichlet series. The first type of process is adequate to modelling phenomena that display some cyclicity, whereas the second type to phenomena that display damping in some charcteristic time period.

\section{LOE in a new regime}  \label{Section_LOE_deterministic}

The special case of LOE given by (\ref{Oper_Eq_resolvent}) is characterized by \emph{unimodality}, i.e. the diagonal solutions are scalar multiples of the identity; namely,  $(1-az)^{-1} I$ for some parameter $a$. We now turn attention to a \emph{multimodal} LOE, which admits more complex diagonal solutions.

As before, we fix a Hilbert space $\mathbb{H}$ with a distinguished basis $\ket{e_m}$. In this section, it may be either finite dimensional ($m = 1,2, \ldots, N$) or infinite-dimensional ($N=\infty$), as well as complex or real.  The multimodal LOE is defined as:
\begin{equation}\label{LOE_general_z}
-z\, \Lambda \frac{d}{dz} F= F-F^2,\quad \mbox{ where } \Lambda = \sum_{m} \frac{1}{m} \ket{e_m}\bra{e_m}.
\end{equation}
The dependent variable $z\mapsto F(z)$ is an analytic operator-valued function with $F(z):\mathbb{H}\rightarrow\mathbb{H}$. An elementary argument shows that the nontrivial\footnote{We do not consider the trivial vanishing solutions.} diagonal solution $F(z)$ is necessarily of the form
\[
F(z) = \sum_{m=0}^{\infty} \frac{1}{1-a_mz^m}\ket{e_m}\bra{e_m},
\]
where $a_m$ are arbitrary. The diagonal entries may be interpreted as eigenmodes of the system described by the dynamic (\ref{LOE_general_z}).

Next, we wish to consider analytic solutions of (\ref{LOE_general_z}) in full generality. In contrast to (\ref{Oper_Eq_resolvent}), we are not aware of the closed-form formula for solutions of this equation. Thus, we resort to a search for solutions in the form of a power series with matrix coefficients. The outcome is a recurrence formula for the coefficients, which characterizes such solutions, namely:

\begin{theorem} \label{Theorem_recurrence}
Assume $F = \sum_{m=0}^{\infty} F_m z^m$ satisfies  (\ref{LOE_general_z}) and $F_0=I$. Then  $F_1=\ket{e_1} \bra{v_1}$ for an arbitrary vector $\ket{v_1} \in \mathbb{H} $. Moreover:
	\begin{itemize}
		
		\item
		When $\mathbb{H} = \mbox{span} \{ \ket{e_k}:\, k \in \mathbb{N}  \}$ (infinite-dimensional Hilbert space) we have for $m>1$
		\begin{equation}\label{LOE_recurrence}
F_m(t)=Q_m\sum_{k=1}^{m-1} F_k F_{m-k}+\ket{e_m} \bra{v_m}, \quad \mbox{ where }\quad Q_m= \sum_{k=1}^{m-1}\frac{k}{m-k} \ket{e_k} \bra{e_k},
		\end{equation}
		and vectors $\ket{v_{m}}$ are arbitrary.
		
		\item
		When $\mathbb{H} = \mbox{span} \{ \ket{e_k}:\, k =1,2,\ldots N  \}$, formula (\ref{LOE_recurrence}) remains valid with the convention $v_m=0$ for all $m>N$.
	\end{itemize}
\end{theorem}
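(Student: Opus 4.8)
The plan is to substitute the formal power series $F = \sum_{m\ge 0} F_m z^m$ directly into the defining equation (\ref{LOE_general_z}) and match coefficients of like powers of $z$, thereby converting the operator ODE into an algebraic recurrence for the matrix coefficients $F_m$. First I would expand both sides. The left-hand side $-z\Lambda\,\frac{d}{dz}F$ gives $-\sum_{m\ge 1} m\,\Lambda F_m\, z^m$, while the right-hand side $F - F^2$, upon using $F_0 = I$ and splitting off the $k=0$ and $k=m$ terms of the Cauchy product, gives $-\sum_{m\ge 1}\bigl(F_m + \sum_{k=1}^{m-1} F_k F_{m-k}\bigr) z^m$ (the constant term vanishes identically since $I - I^2 = 0$). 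Equating the coefficient of $z^m$ for each $m\ge 1$ and rearranging produces the single governing identity
\[
(m\Lambda - I)\,F_m = \sum_{k=1}^{m-1} F_k F_{m-k}.
\]

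Next I would read off the structure of the diagonal operator $m\Lambda - I = \sum_n \frac{m-n}{n}\ket{e_n}\bra{e_n}$: left multiplication by it scales the $n$-th row of $F_m$ by the factor $\frac{m-n}{n}$, which is invertible for every $n\ne m$ and vanishes exactly at $n=m$. Hence for each row $n\ne m$ the identity determines $\bra{e_n} F_m$ uniquely as $\frac{n}{m-n}\,\bra{e_n}\sum_{k=1}^{m-1} F_k F_{m-k}$, which is precisely the action of the operator $Q_m$, while the row $n=m$ is left undetermined and contributes the free term $\ket{e_m}\bra{v_m}$. For the base case $m=1$ the Cauchy sum is empty, so $(\Lambda - I)F_1 = 0$ forces every row except the first to vanish, yielding $F_1 = \ket{e_1}\bra{v_1}$ with $\ket{v_1}$ arbitrary.

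The main obstacle, and the crux of the argument, is the consistency of the singular row $n=m$: since $m\Lambda - I$ annihilates the $m$-th row on the left, a solution can exist only if the $m$-th row of the right-hand side $\sum_{k=1}^{m-1} F_k F_{m-k}$ already vanishes. I would establish this by an induction showing that each $F_k$ (for $k\ge 1$) is supported on rows $1,\dots,k$, i.e. $\bra{e_n} F_k = 0$ whenever $n > k$; this is immediate from the proposed formula (\ref{LOE_recurrence}), since $Q_k$ is supported on rows $1,\dots,k-1$ and $\ket{e_k}\bra{v_k}$ only adds row $k$. Granting this, $\bra{e_m}\sum_{k=1}^{m-1} F_k F_{m-k} = \sum_{k=1}^{m-1}(\bra{e_m} F_k)\,F_{m-k} = 0$, because $m>k$ in every term, so the singular equation holds automatically and the parameter $\ket{v_m}$ is genuinely unconstrained.

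Finally, for the finite-dimensional case I would observe that when $m > N$ all the factors $\frac{m-n}{n}$ with $1\le n\le N$ are nonzero, so $m\Lambda - I$ is invertible on $\mathbb{H}$; there is then no singular row, $Q_m$ coincides with $(m\Lambda - I)^{-1}$ restricted to the available indices, and the coefficient $F_m$ is uniquely determined with no free direction. This is exactly the stated convention $v_m = 0$ for $m > N$, which completes the characterization.
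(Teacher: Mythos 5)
Your proposal is correct and follows essentially the same route as the paper: substitute the power series, match coefficients to get $(I-m\Lambda)F_m=-\sum_{k=1}^{m-1}F_kF_{m-k}$, invert the diagonal operator off its one-dimensional kernel, and use an induction on the row support of the $F_k$ to reduce to $Q_m$. Your only real addition is that you make explicit the solvability condition for the singular row $n=m$ (that the $m$-th row of the quadratic term must vanish), which the paper leaves implicit in its claim that the range of $F_k$ lies in $\mbox{span}\{\ket{e_j}:j\leq k\}$; this is a welcome clarification rather than a different argument.
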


\begin{proof} First, we assume that the Hilbert space is infinite dimensional.
	Substituting the power series of $F$  into (\ref{LOE_general}) and comparing coefficients we obtain
	\begin{equation}\label{rec_bas}
	-m \Lambda  F_m = F_m - \sum_{k=0}^{m} F_k F_{m-k}.
	\end{equation}
	For $m=0$, this implies $0=F_{0}-F_{0}^{2}$, i.e. $F_0$ need to be idempotent; in particular, $F_0=I$, which we chose \emph{a priori}, is admissible. Next,  (\ref{rec_bas}) yields	
	\[
	(I-\Lambda) F_1= 0,
	\]
Note that  $ Ker \, (I-\Lambda)= \mbox{span} \{ \ket{e_1} \}$. In particular, the operator $I-\Lambda$ is invertible when restricted to  the orthogonal complement of $\mbox{span} \{ \ket{e_1} \}$. This readily implies
			\begin{equation}\label{LOE_first}
			 	F_1=\ket{e_1} \bra{v_1} \quad 	\mbox{ with  arbitrary } \ket{v_1}.
			\end{equation}
 Next, when $m>1$, (\ref{rec_bas}) yields
	\begin{equation}\label{rec_bas_2}
	(I-m\Lambda) F_m=- \sum_{k=1}^{m-1} F_k F_{m-k},
	\end{equation}
As before, we notice that  $ Ker \, (I-m\Lambda)= \mbox{span} \{ \ket{e_m} \}$. In particular, the operator $I-m\Lambda$ is invertible when restricted to  the orthogonal complement of $\mbox{span} \{ \ket{e_m} \}$. Moreover, the inverse of the restricted operator is given explicitly as
\[
\sum_{\substack{k=1 \\ k\neq m}}^{\infty} (1-\frac{m}{k})^{-1} \ket{e_k} \bra{e_k} =
\sum_{\substack{k=1 \\ k\neq m}}^{\infty} \frac{k}{k-m} \ket{e_k} \bra{e_k}
\]
Thus, $F_m$ is determined by all $F_k$ with $k<n$, but only up to a term $\ket{e_m} \bra{v_m}$ for an arbitrary vector $\ket{v_m}$. Namely, from (\ref{rec_bas_2}) we obtain a recurrence formula in the form
	\[
	F_m= \left(\sum_{\substack{k=1 \\ k\neq m}}^{\infty} \frac{k}{m-k} \ket{e_k} \bra{e_k}\right)\, \sum_{\substack{k=1}}^{m-1} F_k F_{m-k}+\ket{e_m} \bra{v_m}.
	\]
It follows from this formula and (\ref{LOE_first}) by induction that the range of $F_m$ is a subspace of $\mbox{span} \{ \ket{e_k}:\, k \leq m  \}$. Hence the formula simplifies to the form
(\ref{LOE_recurrence}).

Finally, when the Hilbert space has finite dimension $N$, the operators $I - m\Lambda$ are invertible (in the entire space) for all $m>N$. Thus, the above argument leading to the recurrence formula may be repeated verbatim, except that terms $\ket{e_m}\bra{v_m}$ need to be set to zero when $m>N$. This completes the proof.	
\end{proof}
\begin{remark}\label{Example_polynomials}
Denote $b_n = \ket{e_n}\bra{v_n}$. The recurrence formula indicates that each $F_n$ is a polynomial in variables $b_n$. Here are a few examples:

	\[
	F_1=b_1
	\]

	\[
	F_2=b_1^2 + b_2
	\]

	\[
	F_3=b_1^3+\frac{1}{2}b_1b_2+2b_2b_1+b_3
	\]

	\[
	F_4=b_1^4+\frac{1}{2}b_1^2b_2+\frac{5}{6}b_1b_2b_1+\frac{1}{3}b_1b_3+3b_2b_1^2+3b_3b_1+b_2^2+b_4
	\]

	\[
	\begin{split} F_5=&b_1^5+\frac{1}{2}b_1^3b_2+\frac{5}{6}b_1^2b_2b_1+\frac{1}{3}b_1^2b_3+\frac{13}{12}b_1b_2b_1^2+\frac{5}{6}b_1b_3b_1+\frac{3}{8}b_1b_2^2+\frac{1}{4}b_1b_4+\\&+4b_2b_1^3+6b_3b_1^2+2b_2^2b_1+\frac{3}{2}b_3b_2+\frac{2}{3}b_2b_3+ \frac{5}{3}b_2b_1b_2+b_5+4b_4b_1
	\end{split}
	\]

	\[
	\begin{split} F_6&=b_1^6+\frac{1}{2}b_1^4b_2+\frac{5}{6}b_1^3b_2b_1+\frac{1}{3}b_1^3b_3+\frac{13}{12}b_1^2b_2b_1^2+
\frac{3}{8}b_1^2b_2^2+\frac{5}{6}b_1^2b_3b_1+\frac{1}{4}b_1^2b_4+\frac{77}{60}b_1b_2b_1^3+\\&+\frac{11}{20}b_1b_2b_1b_2+\frac{27}{40}b_1b_2^2b_1+
\frac{7}{30}b_1b_2b_3+\frac{43}{30}b_1b_3b_1^2+\frac{11}{30}b_1b_3b_2+\frac{17}{20}b_1b_4b_1+\frac{1}{5}b_1b_5+\\&+5b_2b_1^4+\frac{9}{4}b_2b_1^2b_2+
\frac{13}{4}b_2b_1b_2b_1+\frac{7}{6}b_2b_1b_3+3b_2^2b_1^2+b_2^3+\frac{11}{6}b_2b_3b_1+\frac{1}{2}b_2b_4+\\&+10b_3b_1^3+\frac{7}{2}b_3b_1b_2+
\frac{7}{2}b_3b_2b_1+b_3^2+10b_4b_1^2+2b_4b_2+5b_5b_1+b_6
	\end{split}
	\]

The general formula is not given explicitly. However, we can summarize as follows:
	\begin{equation}\label{LOE_mean_diagonal}
	F_m=\sum\limits_{i_1+\cdots +i_p=m} c_{i_1,i_2,\cdots,i_p}\,b_{i_1}b_{i_2}\cdots b_{i_p},
	\end{equation}
	where $c_{i_1,i_2,\cdots,i_p}$ are some constant coefficients. (We use the convention  $b_{0}=I$.) A direct induction argument based on (\ref{LOE_recurrence}) yields $c_{1,1,\cdots,1} = 1$, so that the right hand side of (\ref{LOE_mean_diagonal}) always contains the term  $b_1^m$.

We also note that  the recurrence  yields an upper bound on the growth of the operator norm of $F_n$. Indeed, denoting $x_n = \|b_n\| = \|v_n\|$, and using the subadditivity and submultiplicativity of the operator norm, we see that $\|F_n\|$ is bounded above by a polynomial in $x_1, x_2, \ldots x_n$, e.g. one readily finds
$
\|F_3\| \leq 4x_1^3 + 4x_1x_2 + x_3.
$

We emphasize that we do not undertake here the problem of analytic continuation, or even convergence of the series $F(z)$, cf. Remark \ref{Remark_conv}.
\end{remark}

\section{The multimodal RandLOE} \label{Section_RandLOE}

In what follows
 we only consider the finite-dimensional $N$-by-$N$ version of the LOE. We start by observing that choosing a path in the complex plane $t\mapsto z(t)$ gives a deterministic dynamic equation for $t\mapsto F(z(t))$, namely
\begin{equation}\label{LOE_general}
-\Lambda \frac{d}{dt} F= \frac{z'}{z} (F-F^2).
\end{equation}
We write $F(z(t))=F(t)$ for short.
Note that for the special choice $z= e^{-t}$ the diagonal modes are logistic curves and for $z = \exp it$, complex-valued logistic curves. In either case, the non-diagonal solutions are more interesting.
The recurrence formula given in Theorem \ref{Theorem_recurrence} enables simulation of such solutions with nearly perfect accuracy.  Recall, the singular value decomposition of an arbitrary matrix $F(t) \in \mathbb{C}^{N \times N}$, namely:
\[ F(t)= U(t) \, S(t) \, V(t)  \quad \mbox{where} \quad S(t)=\sum_{j=1}^{N} s_j(t) \ket{e_j}\bra{e_j},  \]
and both $U(t)$ and $V(t)$ are unitary.
The singular values $s_j(t)$ are the square
roots of the eigenvalues of $F(t)^{\dagger} F(t)$, where $F(t)^\dagger$ is the adjoint of $F(t)$.  Fig. \ref{Fig_Sing_Values} illustrates the evolution of singular values of $F(t)$ for the two highlighted curves.

\begin{remark}\label{Remark_conv}
The numerical work relies on partial sums of the series for $F(z(t))$. Heuristic arguments and numerical experiments strongly point to series convergence when $z(t) = e^{-(\alpha +i\beta) t }$ (for $t>\varepsilon> 0$) with an arbitrary $\beta $ and positive $\alpha$. We point out that the right-hand side of (\ref{LOE_general}) is only locally Lipschitz continuous. Thus, based on ODE theory,  solutions of the initial value problem are only guaranteed to exist locally in time; in other words, the possibility of finite time blowup is not excluded \emph{a priori}. Matters are even more complicated as regards the series solutions that are our focus. Indeed, we have $F(0) = I+ \sum_{n=1}^{\infty} F_n$, which series is not \emph{a priori} guaranteed to converge. Such convergence conundrums are characteristic of combinatorially complex symmetry-based calculations, Quantum Field Theory being another example.

\end{remark}

\begin{figure}[H]
	\centering
	\includegraphics[width=100mm]{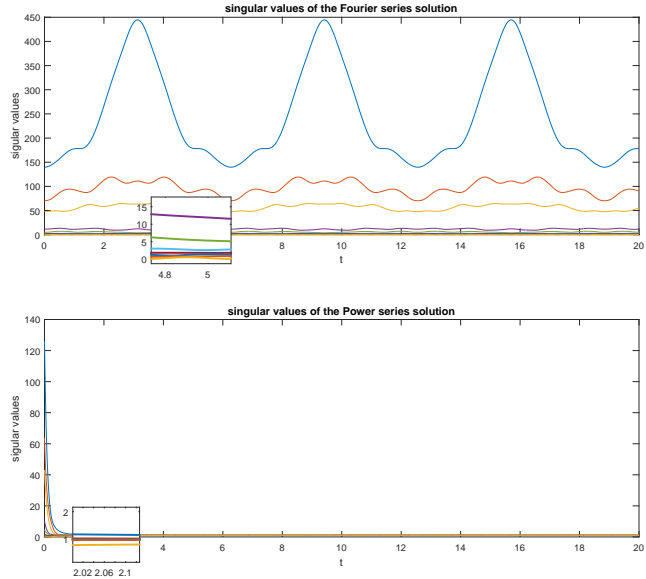}
	\caption{The singular values of $F(t)$, $N=10$ when $z=\exp (it)$ (top) and $z = \exp (-t)$ (bottom).}
	\label{Fig_Sing_Values}
\end{figure}

We wish to consider a LOE-based stochastic process obtained by randomizing the exogenous variables in (\ref{LOE_recurrence}).  Namely, let the exogenous variables $v_m = v_m(t)$ be time-dependent in the form of mutually independent $N$-dimensional Wiener processes, i.e.
\begin{equation}\label{Wiener}
\ket{v_m(t)} = \sum_{j=1}^{N} W_{mj}(t)\, \ket{e_j}, \quad k=1,2,\ldots ,N,
\end{equation}
where $W_{mj}(t)$ are standard mutually independent Wiener processes.
Formula (\ref{LOE_recurrence}) renders $F_n=F_n(t)$, $n \in \mathbb{N}$, as time-dependent stochastic processes. Subsequently, this gives rise to the main stochastic process:
\begin{equation}\label{RandLOE}
t\mapsto F(t; v_1(t), v_2(t),\ldots v_N(t))=\sum_{n=0}^{\infty} F_n(t) e^{-z nt}.
\end{equation}
We will refer to it as multimodal RandLOE. This random process is easy to simulate numerically by combining a simulation of (\ref{Wiener}), e.g. via the Karhunen-Lo\`{e}ve method,  with computation of $F_n(t)$ via (\ref{LOE_recurrence}). An example of the evolution of the singular values of $F(t)$  for the two highlighted curves is given in Fig. \ref{Fig_RandLOE_KL}.

\begin{figure}[H]
	\centering
	\includegraphics[width=100mm]{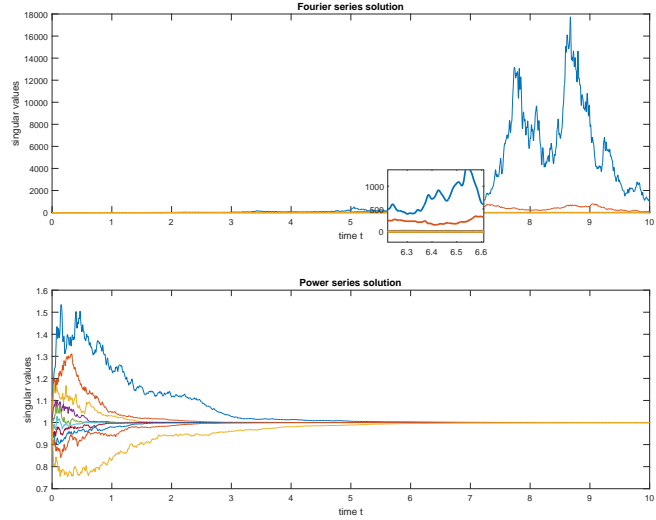}
	\caption{Singular values of the LOE-based stochastic process. All $v_m(t)$ are generated by Karhunen-Lo\`{e}ve method; $N=10$. The top graph shows that the dynamic corresponding to $z(t) = \exp (it)$ leads to intermittent expansions and contractions, whereas the dynamic based on $z(t) = \exp (-t)$ is dampened after some characteristic time period.}
	\label{Fig_RandLOE_KL}
\end{figure}

We would like to know if the multimodal RandLOE is also an It\^{o} process, just as its unimodal counterpart (\ref{Def_randloe_special})?  What differentiates it from the special case is lack of a closed-form formula for $F(t)$. This may be more than a mere technicality. Indeed, the multimodal RandLOE seems to have a greater inherent complexity. Based on this observation, we venture a hypothesis that the general RandLOE is not an It\^{o} process. At present this is an open problem.

In order to gain some additional insight into the nature of multimodal RandLOE we investigate its mean. This is  somewhat akin to the mean field approach to network analysis, \cite{Barabasi}.
We observe the following:

\begin{proposition} Let $F_m(t)$ be the coefficients of multimodal RandLOE.
When $m$ is odd $E[F_m(t)]=0$. For an even $m$, $E[F_m(t)]$ is a diagonal matrix whose entries are polynomials in $t$ of degree  $m/2$.
\end{proposition}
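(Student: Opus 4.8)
The plan is to expand $E[F_m(t)]$ using the monomial representation (\ref{LOE_mean_diagonal}) and to evaluate the mean of each monomial $b_{i_1}b_{i_2}\cdots b_{i_p}$ (with $i_1+\cdots+i_p=m$) by exploiting the Gaussianity of the $W_{mj}(t)$. First I would record the explicit form of such a monomial. Since $b_i=\ket{e_i}\bra{v_i}$ and $\langle v_i|e_{i'}\rangle=W_{i,i'}(t)$, a direct computation gives
\[
b_{i_1}b_{i_2}\cdots b_{i_p}
=\sum_{j=1}^{N}\Big(\prod_{l=1}^{p-1}W_{i_l,i_{l+1}}(t)\Big)\,W_{i_p,j}(t)\,\ket{e_{i_1}}\bra{e_j},
\]
so only row $i_1$ is nonzero and its $(i_1,j)$ entry is a product of exactly $p$ Wiener variables (monomials carrying a factor $b_i$ with $i>N$ vanish because $\ket{v_i}=0$, so all parts may be taken in $\{1,\ldots,N\}$ and the processes above are well defined). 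Taking expectations entrywise and using that distinct index pairs yield independent centered Gaussians of variance $t$, together with the moment identities $E[W^{2r}]=(2r-1)!!\,t^{r}$ and $E[W^{\mathrm{odd}}]=0$, the $(i_1,j)$ entry has nonzero mean if and only if every distinct index pair occurring among the $p$ factors occurs an even number of times; in that case the mean is a positive constant times $t^{p/2}$. Hence $E[b_{i_1}\cdots b_{i_p}]$ is either $0$ or $C\,t^{p/2}\ket{e_{i_1}}\bra{e_{i_1}}$ with $C>0$.

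The key device is to read the $p$ factors $W_{i_1,i_2},\ldots,W_{i_{p-1},i_p},W_{i_p,j}$ as the successive edges of a directed walk $i_1\to i_2\to\cdots\to i_p\to j$ on $\{1,\ldots,N\}$, so that the condition ``each index pair even'' becomes ``each directed edge is traversed an even number of times.'' For the odd-$m$ claim I would observe that the tails of these $p$ edges are exactly $i_1,\ldots,i_p$, so their multiset sum is $m$; if every edge has even multiplicity, this sum is a sum of even multiples of vertex labels and is therefore even. Thus $E[b_{i_1}\cdots b_{i_p}]\neq 0$ forces $m$ to be even, and for odd $m$ every monomial in (\ref{LOE_mean_diagonal}) has vanishing mean, giving $E[F_m]=0$.

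For even $m$, diagonality follows from a degree-balance argument on the same walk: for any walk from $i_1$ to $j$, the out-degree minus in-degree at each vertex $v$ equals $[v=i_1]-[v=j]$; when all edge multiplicities are even this difference is even at every vertex, which is incompatible with the value $\pm 1$ unless $i_1=j$. Hence only diagonal matrix units $\ket{e_k}\bra{e_k}$ survive in the mean. For the polynomial and degree statement, each surviving monomial contributes the single power $t^{p/2}$ with $p\le m$ (equality only for the all-ones composition), so every diagonal entry of $E[F_m]$ is a polynomial in $t$ of degree at most $m/2$. The unique top-degree source is $b_1^m$, whose coefficient in $F_m$ equals $1$ by the computation $c_{1,\ldots,1}=1$ of Remark \ref{Example_polynomials} and whose mean is $(m-1)!!\,t^{m/2}\ket{e_1}\bra{e_1}$; since all composition-coefficients are positive (an easy induction on (\ref{LOE_recurrence}), where $Q_m$ has positive entries) and every surviving mean is a positive multiple of $t^{p/2}$, there is no cancellation and the degree is exactly $m/2$.

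The main obstacle is the bookkeeping in the two parity arguments, namely translating the Gaussian-moment condition into ``each directed edge is traversed evenly'' and then extracting from it both the parity of $m$ (through the sum of edge-tails) and diagonality (through the degree-balance of the walk). Everything else is routine: the entrywise expectation factors into one-dimensional Gaussian moments, and the degree count reduces to tracking the single exponent $p/2$ and isolating the $b_1^m$ term.
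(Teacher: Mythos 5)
Your proof is correct, and in the decisive combinatorial step it is actually more careful than the paper's. Both arguments begin identically: expand $F_m$ into monomials $b_{i_1}\cdots b_{i_p}$ with $i_1+\cdots+i_p=m$, write each matrix entry as a product of Wiener increments, and apply the Gaussian moment formula. The paper then asserts that the only monomials with nonzero mean are even powers $\left(b_{i_1}\cdots b_{i_n}\right)^r$ with $r$ even, and reads off diagonality, the parity of $m$, and the degree from the explicit formula $E\left[\left(b_{i_1}\cdots b_{i_n}\right)^r\right]=\left[t^{r/2}(r-1)!\,!\right]^n\ket{e_{i_1}}\bra{e_{i_1}}$. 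That characterization is sufficient but not necessary: for instance $b_1b_2b_1b_2b_2b_2$ is not an even power of any word, yet its $(1,1)$ entry is $W_{12}^2W_{21}^2W_{22}^2$, with mean $t^3\neq 0$. Your criterion --- that each distinct index pair, i.e.\ each directed edge of the walk $i_1\to i_2\to\cdots\to i_p\to j$, must be traversed an even number of times --- is the correct necessary and sufficient condition, and your two parity arguments (the sum of edge-tails equals $m$ and is even when all multiplicities are even; the degree balance $[v=i_1]-[v=j]$ must be even at every vertex, forcing $j=i_1$) extract exactly the conclusions the proposition needs while covering the monomials that the paper's classification misses. The degree count is the same in both treatments: $b_1^m$ is the unique length-$m$ monomial, has coefficient $c_{1,\ldots,1}=1$, and is the only possible source of $t^{m/2}$, so the top degree is exactly $m/2$; your additional remark that all composition coefficients are positive (hence no cancellation at lower degrees either) is not needed for this step but is true and harmless.
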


\begin{proof}
The proof relies essentially on Theorem \ref{Theorem_recurrence}.
As before, let $\ket{e_m} \bra{v_m(t)}=b_m(t)$ where $v_m$ are as in (\ref{Wiener}).
We invoke  the well-known formula for the moments of a normally distributed variable: 	
	\[
	E[W(t)^r]=\begin{cases}
	t^{r/2}(r-1)!\, ! & r \mbox{ even} \\
	0 & r \mbox{ odd. }
	\end{cases}
	\]
It follows that the only polynomials in $b_k$ that have a nonzero expected value are periodic with an even number of cycles. More precisely, these are of the form
\[
\left(b_{i_1}b_{i_2}\ldots b_{i_n}\right)^r \quad (n\geq 1,  r \mbox{ even}).
\]
We readily obtain
\begin{equation}\label{expect_cycle}
  E\left[\left(b_{i_1}b_{i_2}\ldots b_{i_n}\right)^r\right] =  \left[ t^{r/2}(r-1)!\, ! \right]^n \ket{e_{i_1}}\bra{e_{i_1}}\quad (r \mbox{ even}).
\end{equation}
Note that if $r$ is even, then so is $m$. We emphasize that all other polynomials have vanishing expectation.
On the other hand, (\ref{LOE_recurrence}) implies
\[
E[F_m(t)] = Q_m\sum_{k=1}^{m-1} E[\,F_k(t) F_{m-k}(t)\,].
\]
It follows by induction that $F_k(t) F_{m-k}(t)$ is a polynomial of degree $m$ in $b_k(t)$ with $k \leq m-1$, cf. (\ref{LOE_mean_diagonal}). (\ref{expect_cycle}) implies in particular that $E[F_m(t)]$ is a diagonal matrix. In addition, $E[F_m(t)]\neq 0$ only if $m$ is even. An inductive argument also shows that $F_m(t)$ contains the term $b_1(t)^m$, which contributes a monomial of the highest possible degree $m/2$. This completes the proof.
\end{proof}

\begin{example}
Proceeding as in Remark \ref{Example_polynomials} we present some computed examples\footnote{We have used a Matlab package NCSOStools and a custom Python code to accomplish this computation.}:
\[
E[F_2(t)] =  \mbox{diag}(t,0,  \ldots, 0)
\]
	\[
	E[F_4(t)]=\mbox{diag}(3t^2,t,0,\ldots,0)
	\]
	\[
	E[F_6(t)]=\mbox{diag}(15t^3+\frac{11}{20}t^2,\frac{13}{4}t^2,t,0,\ldots,0)
	\]
	\[
E[F_8(t)]=\mbox{diag}(105t^4+\frac{31}{63}t^2+\frac{1723}{315}t^3,3t^2+\frac{268}{15}t^3,\frac{37}{5}t^2,t,0,\ldots,0)
\]
\[
E[F_{10}(t)]=\mbox{diag}(945t^5+\frac{4759}{84}t^4+\frac{2866}{303}t^3+\frac{23}{48}t^2,\frac{7599}{56}t^4+\frac{70589}{4032}t^3+\frac{47}{72}t^2,\frac{482}{7}t^3+\frac{53}{28}t^2,\frac{27}{2}t^2,t,0,\ldots,0)
\]
\end{example}

\begin{remark} It is interesting to observe the analogous result for the unimodal RandLOE. Recall that the series coefficients in this case are $X^m$, where $X=X(t)$ is an $N$-by-$N$ matrix whose entries are independent standard Wiener processes. It is easy to observe that
$E[X^m] = C(m)\, t^{m/2} I$ for $m$ even, where $C= C(m)$ is a constant. Also, $E[X^m]=0$ when $m$ is odd. Thus, the expected values of coefficients in the unimodal case are monomials, contrasting with the nontrivial polynomials of the multimodal case. To our knowledge this by itself does not provide any clues as to whether the latter case is an It\^{o} process or not.
\end{remark}

\section{Summary: RandLOE as a complex network dynamics} \label{Section_Summary}

A network is understood to be a complete graph. Its vertices represent the nodes, e.g. market players, internet servers, etc. Any pair of nodes are connected by two edges (arrows) with opposite orientations. The arrows represent exchange channels for transporting goods or information. In addition, each node is connected to itself via an un-oriented looping edge. All arrows and loops are assigned complex-valued weights, which change in time. The network dynamic is the time-evolution of the weights. The magnitude of the weights is interpreted as the capacity of their respective channels.
The weights are allowed to take on complex values in order to incorporate interference effects, which may be synergetic (amplifying) or uncooperative (suppressing), depending on the temporal distribution of phases.

A network dynamics is then encoded by a time-dependent complex matrix $F(t)$, so that $F_{ij}(t)$ is the weight of the arrow from node $i$ to node $j$, and $F_{jj}(t)$ is the weight of the loop at node $j$. Although many particular scenarios could be considered, the LOE dynamic (\ref{LOE_general}) has the advantage of being relatively simple. The choice of path $z(t)$ and the choice of $\Lambda$ determine the character of dynamics, see Fig. \ref{Fig_Sing_Values}. Note also that this type of dynamics factors to subalgebras, e.g. one may consider upper-triangular matrix solutions, etc.


In our interpretation, the deterministic LOE alone describes network evolution in the absence of external stimuli. We extended the model, introducing the multimodal RandLOE, in order to study evolution of networks stimulated by random exogenous factors. Samples of the resulting stochastic processes are given (via the singular values) in Fig. \ref{Fig_RandLOE_KL}. Naturally, numerical simulations take into account only finitely many terms of the (theoretically defined) infinite series. We have not investigated the problem of convergence, as it presents a formidable challenge that deviates from the core goal of this work,  cf. Remark \ref{Remark_conv}. However, we are satisfied that the model is computationally stable.

\section*{Acknowledgments}  We acknowledge helpful discussions  while this work was underway with Professors  Shahedul Khan, Raymond Spiteri, and FangXian Wu. We also acknowledge partial support of the BIT-UofS FLAGSHIP (a framework for scientific collaboration between the Beijing Institute of Technology and the University of Saskatchewan). 




\theendnotes

\end{document}